\title{Quantum Money from Modular Forms}
\author{Daniel M. Kane}
\newtheorem{theorem}{Theorem}
\newtheorem{lemma}{Lemma}
\newtheorem{claim}{Claim}
\newtheorem{fact}{Fact}
\newtheorem*{remark}{Remark}
\newtheorem{problem}{Problem}
\newcommand{\ket}[1]{\left|{#1}\right\rangle}
\newcommand{\bra}[1]{\left\langle{#1}\right|}
\newcommand{\basis}[1]{\ket{\psi_{#1}}}
\newcommand{\cobasis}[1]{\bra{\psi_{#1}}}
\newcommand{\C}{\mathbb{C}}
\newcommand{\Q}{\mathbb{Q}}
\newcommand{\R}{\mathbb{R}}
\newcommand{\HH}{\mathbb{H}}
\newcommand{\Z}{\mathbb{Z}}
\newcommand{\PP}{\mathbb{P}}
\newcommand{\eps}{\epsilon}
\newcommand{\E}{\mathop{\mathbf E}\limits}
\newcommand{\tr}{\mathrm{Tr}}
\newcommand{\co}{\mathcal{O}}
\newcommand{\cl}{\mathrm{Cl}}
\newcommand{\poly}{\mathrm{poly}}
\newcommand{\h}{\mathfrak{H}}
\begin{document}
\maketitle

\section{Introduction}

One of the main challenges to building a purely digital currency is that digital information can be copied, allowing adversaries to duplicate bills or more generally perform double spending attacks. Existing cryptocurrencies solve this problem by maintaining a tamper-proof ledger of all transactions to ensure that the same bill is not spent multiple times by the same actor. Essentially, in these schemes, money is not represented by a digital token so much as a number on this decentralized ledger.

Another idea for solving the bill copying problem is to make use of the quantum no-cloning principle and taking advantage of the idea that quantum information in general \emph{cannot} be copied. A scheme to take advantage of this was proposed by Wiesner in \cite{PrivateKeyMoney}. His scheme involved the bank preparing a quantum state that was an eigenstate in a secret basis. The bank could verify the correctness of the state, but it was information-theoretically impossible for an adversary without possession of this secret to copy the state in question. Unfortunately, this scheme has the disadvantage that one needs to contact the bank in order to verify the legitimacy of a bill.

Since then, there has been an effort to develop schemes for public key quantum money- that is a scheme by which there is a publicly known protocol for checking the validity of a bill. In such a system, the bank has a mechanism for producing valid bills, and there is a publicly known mechanism for non-destructively checking the validity of a given bill. It should be computationally infeasible to produce $n+1$ valid bills given access to $n$ without access to the bank's secret information. We note that such schemes can at best be computationally secure rather than information theoretically secure, as it is a finite computational problem to construct a quantum state that reliably passes the publicly known verification procedure. Nonetheless, there have been several proposals over the years for cryptographically secure quantum money based on ideas such as knot theory \cite{knots} and function obfuscation \cite{obfuscation,Lightning}. In this paper, we make a new proposal for public-key quantum money using ideas from modular forms.

\subsection{Our Proposed Scheme}

\subsubsection{The Black Box Protocol}

The verification procedure for quantum money must be non-destructive. A natural way to achieve this goal is to make the state an eigenstate of some (commuting collection of) measurement operators. Thus, it is natural to consider a scheme where there is a set of commuting unitary operators $U_1,U_2,\ldots,U_m$, and a bill is a joint eigenstate $\ket{\psi}$. One can easily verify such a state and measure the corresponding eigenvalues of the $U_i$ non-destructively. One can also produce a random such state by starting with some arbitrary state $\ket{\rho}$ and measuring in the eigenbasis of the $U_i$. In fact one can construct pairs with the same eigenvalues by starting with a Bell state. However, there seems to be no obvious way to produce more than these two eigenstates with the same eigenvalues. This makes this into a scheme for \emph{quantum lightning} (see \cite{Lightning}), which by standard methods can be turned into a quantum money protocol. In particular, to create quantum money, one merely needs the state $\ket{\psi}\ket{\psi}$ along with a classical digital signature certifying the sequence of eigenvalues.

We show (see Theorem \ref{LowerBoundTheorem} below) that this quantum money scheme is secure if the $U_i$ are implemented as oracles. However, in order to obtain a practical version of this scheme, one would need to find explicit commuting operators $U_i$ so that the joint eigenstates are cryptographically complicated. This is a non-trivial problem, but we have a candidate collection coming from the action of Hecke operators on spaces of modular forms.

We will discuss the details of the black box version of this protocol and the related security proofs in Section \ref{BlackBoxSection}.

\subsubsection{Modular Forms}

Modular Forms are spaces of highly symmetric analytic functions on the upper half of the complex plane with a storied mathematical history, finding applications in problems as diverse as the computation of partition numbers to the proof of Fermat's Last Theorem. For our purposes, we note that if $N$ is a large prime, then the space of weight 2 cusp forms of level $N$ is an explicit vector space $S_2(\Gamma_0(N))$ of dimension $\Theta(N)$ and that for small primes $p$ the Hecke operators $T_p$ act on $S_2(\Gamma_0(N))$ as a collection of commuting, self-adjoint operators. This implies that the operators $e^{iT_p}$ acting on this space are a collection of commuting, unitary operators acting on a large, complex vector space. Furthermore, these operators seem relatively likely to be cryptographically complicated.

Unfortunately, having an abstract set of commuting operators is not enough, as we will also need a computationally efficient way to implement these operators. Fortunately, the action of these Hecke operators on these spaces of modular forms can be expressed as random walks on certain graphs on the class group of maximal orders in certain quaternion algebras.

We will give a very brief overview of the theory of modular forms in Section \ref{ModularFormsSection}. For those interested in a more thorough treatment of the subject, we will be using \cite{ModularForms} as a reference. We will describe some of the relevant theory of quaternion orders and class groups in Section \ref{QuaternionSection} using \cite{quaternions} as a reference. Finally, we will discuss the details of the computation in Sections \ref{computationSection} and \ref{bellSection}.

\section{The Black Box Protocol}\label{BlackBoxSection}

We begin by giving a black box version of our quantum money protocol that relies on having a number of commuting, black box, unitary operations $U_i$. In this section we present the protocol along with some black box attacks against it.

\subsection{The Protocol}

We begin by assuming that there is some $N$ dimensional vector space $V \subset (\C^2)^{\otimes n}$ with a computationally feasible basis (in particular so that we can construct a maximally entangled state in $V\otimes V$). We also assume that there are commuting unitary operators $U_1,U_2,\ldots,U_m$ on $V$. As they commute, there exists a joint eigenbasis $\{\basis{i}\}_{1\leq i\leq N}$. Each $\basis{i}$ has some associated eigenvalues $U_j\basis{i} = z_{ij}\basis{i}$ for some unit norm complex numbers $z_{ij}$. We define the vectors $v_i = (z_{i1},z_{i2},\ldots,z_{im})$ and assume that the $v_i$ are pairwise at least $\eps$-far from each other. Given an oracle that can compute controlled versions of the $U_i$, we present the following quantum money protocol:

A \emph{bill} in this protocol consists of three things:
\begin{enumerate}
\item A \emph{note}, which is a quantum state of the form $\basis{i}\basis{i}$ for some $i$.
\item A \emph{serial number}, which is classical information providing a numerical approximation to $v_i$ to error less than $\eps/3$.
\item A classical digital signature of the serial number signed by the mint.
\end{enumerate}

\noindent In order to validate a bill one merely needs to:
\begin{enumerate}
\item Verify the digital signature of the serial number.
\item Use phase estimation to verify that the note is an eigenstate of $U_i\otimes I$ and $I\otimes U_i$ with eigenvalues within $\eps/2$ of those given in the serial number.
\end{enumerate}
\begin{remark} There are a few important things to note about this protocol:

\begin{itemize}
\item If the bill was initially properly prepared, the process of verification does not change it.
\item If the note was not an eigenstate of the $U_i$ before this procedure is applied, it will be after the phase estimation step, and therefore anything that passes the validation procedure will be a valid bill by the time the procedure ends.
\item Due the assumed separation of the $v_i$, any pair of notes that validate for the same serial number, must (after validation) have notes corresponding to the same eigenstate.
\end{itemize}
\end{remark}

\noindent In order to mint a bill, the bank merely needs to:
\begin{enumerate}
\item Prepare a Bell state $\frac{1}{\sqrt{N}}\sum_i \basis{i}\basis{i}$ for $V$.
\item Use phase estimation on the first component of this state. This will project the state onto $\basis{i}\basis{i}$ for some $i$ and return an approximation to $v_i$.
\item Use the approximation given above as the serial number, which it then digitally signs.
\end{enumerate}

We note that if the serial number is required to be an appropriate unique rounding of the eigenvalues of $\basis{i}$ rather than merely an approximation, this looks very much like a protocol for \emph{quantum lightning} in the sense of \cite{Lightning}- that is a mechanism that can produce and label one of a number of states, but for which it is hard even for an adversarial algorithm to produce multiple copies of the same such state. We chose to use arbitrary approximations so that one does not need to worry about precision errors if the true eigenvalues are near the boundary between two different roundings, however, we should maintain many of the applications of quantum lightning including quantum money as well as provable randomness.

\subsection{The Security Problem}

What might an attack against this scheme look like? For quantum lightning, an attack would require a method for producing two copies of the same bolt (in this case pair of identical eigenstates). We argue that any attack on our quantum money protocol should be able to do this. In fact it is enough to note that having four copies of the same eigenstate, one can throw away one to get three copies. Thus, we base our security on the following problem:
\begin{problem}\label{attackProblem}
Manufacture a state of the form $\basis{i}\basis{i}\basis{i}$ for some $1\leq i \leq N$.
\end{problem}
We claim that any agent capable of attacking this system, must be capable of solving Problem \ref{attackProblem}. In particular, we consider two kinds of attacks on the system:
\begin{enumerate}
\item Attacks by the mint: This would apply for systems where the mint creates a public registry of valid serial numbers (or perhaps puts them into a hash tree, publishing only the root). In such a system, the mint itself might try to cheat by creating multiple copies of bills appearing in the registry.
\item Attacks by others: an attacker given access to some number of valid bills and perhaps a much larger number of valid serial number signatures finds some procedure by which to spend more bills than they initially had access to.
\end{enumerate}

We attempt to argue that if one is able to perform either type of attack that one can solve the security problem. In particular, we show that:
\begin{theorem}
Suppose that an adversary using a quantum computer in time $C$ can either:
\begin{enumerate}
\item Given the secret key to the signing protocol, produce $n+1$ valid bills with at most $n$ total serial numbers among them with probability at least $p$.
\item Given $n$ bills and $s$ signatures of serial numbers but without access to the signing key for the signatures, run a procedure, which with probability at least $p$ gets third parties to verify at least $n+1$ bills.
\end{enumerate}
Then there is a quantum algorithm that runs in time $O((C+s+n)/p)$ and with constant probability either:
\begin{enumerate}
\item Given a collection of valid signatures from the signature scheme, produces a new valid signature without access to the private key.
\item Solves Problem \ref{attackProblem}.
\end{enumerate}
\end{theorem}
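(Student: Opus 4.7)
\emph{Proof plan.} For the first type of attack (a dishonest mint), the output is $n+1$ valid bills but with at most $n$ distinct serial numbers, so by pigeonhole two of the output bills carry the same serial number. By the $\epsilon/2$ tolerance of the phase-estimation step combined with the assumed $\epsilon$-separation of the $v_i$, the third bullet of the remark after the protocol forces both notes, once verification has completed, to equal $\basis{i}\basis{i}$ for a common index $i$. Concatenating and discarding one register then yields $\basis{i}\basis{i}\basis{i}$, solving Problem \ref{attackProblem}. No signature-scheme assumption is used in this branch.

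For the second type of attack (an outsider), the plan is to embed the adversary into the existential-unforgeability game for the signature scheme. Given a signing oracle, I first run the honest mint procedure $n+s$ times, submitting each resulting serial number to the oracle and recording the signature. The simulator hands the adversary the first $n$ minted bills together with all $n+s$ signed serial numbers, keeping the remaining $s$ bills in memory; by construction this view is identically distributed to the one the attack receives in the real scheme. (If two of the $n+s$ mints happen to produce the same eigenstate, we already hold four copies and can output three immediately; for large $N$ this collision is negligible and can be ignored.)

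After the adversary returns $n+1$ supposed bills and third parties verify them, I inspect the serial numbers. If any of them was never submitted to the signing oracle, its attached signature is a valid forgery on a fresh message and we output it. Otherwise all $n+1$ serial numbers lie in the set of $n+s$ that we signed. Together with the $s$ bills I kept, I now hold $n+1+s$ verified bills whose serial numbers live in a pool of $n+s$ distinct values, so pigeonhole produces two bills sharing a serial number. The same $\epsilon$-separation argument as in the mint case forces their post-verification notes to be $\basis{i}\basis{i}$ for a common $i$, and discarding one register of the resulting four copies solves Problem \ref{attackProblem}.

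The reduction itself performs $O(n+s)$ mintings and verifications on top of one call to the adversary (time $C$), so a single run costs $O(C+n+s)$ and succeeds with probability at least $p$; repeating $O(1/p)$ times and returning the first nontrivial output gives the claimed $O((C+n+s)/p)$ bound and constant success probability. The main conceptual step, and the only place the argument is not pure pigeonhole, is the simulation in the outsider case: one must generate the $s$ extra signatures with a distribution indistinguishable from real mintings, and it is important that retaining the $s$ kept bills (rather than just their serial numbers) is what lets the reduction convert a serial-number collision between an output bill and a kept bill into three honest-to-goodness copies of an eigenstate.
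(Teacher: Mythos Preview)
Your proposal is correct and follows essentially the same approach as the paper: for the mint case you use pigeonhole on serial numbers plus the $\epsilon$-separation of the $v_i$ to extract three copies of a single eigenstate, and for the outsider case you mint $n+s$ bills via a signing oracle, feed $n$ to the adversary while retaining the other $s$, and then argue that either a fresh signature appears (forgery) or pigeonhole across the $n+s+1$ resulting notes yields two with the same serial number and hence three copies of an eigenstate, repeating $O(1/p)$ times for constant success. The only cosmetic discrepancy is that you describe handing the adversary ``all $n+s$ signed serial numbers'' rather than the $s$ extra ones the hypothesis specifies, but this is immaterial since the $n$ bills already carry their own signatures.
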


Note the similarity to the security proofs in \cite{obfuscation}.
\begin{proof}
For attacks by the mint the argument is easy. By the pigeonhole principle, at least two of the bills produced must have the same serial number. Given the separation between the $v_i$, this must mean that the notes in question are both of the form $\basis{i}\basis{i}$ for the same value of $i$. Using one and a half of these, they have produced a state of the form $\basis{i}\basis{i}\basis{i}$. Thus, in time $C$ we can solve Problem \ref{attackProblem} with probability $p$. Repeating $1/p$ times, yields a constant probability of success.

The argument for the second kind of attack is slightly more subtle. Suppose that given $n$ valid bills and $s$ additional valid signatures of serial numbers, the attacker can spend a total of $n+1$ bills. We use this procedure, along with a black box classical signing algorithm to with probability at least $p$ either solve Problem \ref{attackProblem} or produce a valid signature not returned by the signing algorithm. Once again, repeating $1/p$ times, improves this to a constant probability of error.

To this end, we use the signing algorithm to produce $n+s$ valid bills. For $s$ of them, we put the notes aside, leaving only the signatures and serial numbers. The other $n$ are given to the algorithm. The algorithm (which is not given the private key for the signature scheme), simulates its interaction with third parties leading to proper verification of $n+1$ bills. Note that after verification these notes along with the $s$ put aside give us a total of $n+s+1$ notes, each with a valid signature of its serial number. It must be the case that either we have some valid signature that is not one of the $n+s$ originally produced by out signing algorithm (thus producing a new signature without the private key), or, by the pigeonhole principle, at least two notes corresponding to the same valid signature, in which case we have solved Problem \ref{attackProblem}. This completes the proof.
\end{proof}

\subsection{A $\sqrt{N}$ Attack}

We note that the obvious $O(\sqrt{N})$ time attack against the security problem:
\begin{itemize}
\item Mint $\sqrt{N}$ bills.
\item Search for pairs of bills with serial numbers sufficiently close to each other.
\end{itemize}
Each bill will yield a state $\basis{i}\basis{i}$ for a uniform random value of $i$, therefore, by the birthday paradox, we should expect to find a collision within the first $O(\sqrt{N})$ bills.

\subsection{A Black Box Lower Bound}

One might worry about black box attacks against this system. That is attacks making use of no special structure of $V$ or the $U_i$ and merely applying the $U_i$ in some specified way in order to attack our security problem. Here we show that any such attack must take at least $\Omega(N^{1/3})$ time.
\begin{theorem}\label{LowerBoundTheorem}
Any circuit using standard gates and controlled $U_i$ gates that solves Problem \ref{attackProblem} with constant probability for arbitrary sets of commuting operators $U_i$ must have $\Omega(N^{1/3})$ controlled $U_i$ gates.
\end{theorem}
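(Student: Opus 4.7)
The plan is to reduce Problem~\ref{attackProblem} to the quantum collision problem, for which the Aaronson--Shi theorem gives a matching $\Omega(N^{1/3})$ query lower bound.

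First, I would construct a family of hard Problem~\ref{attackProblem} instances parameterized by a collision oracle. Given a function $f\colon [N] \to [N]$ promised to be either one-to-one or two-to-one, take $V = \C^N$ and form commuting unitaries $U_j\ket{x} = e^{i\alpha_j f(x)}\ket{x}$, whose applications can be simulated with $O(1)$ quantum queries to $f$ by the standard phase-oracle trick. The joint eigenspaces of the $U_j$ are then precisely the preimages of $f$: one-dimensional in the one-to-one case and two-dimensional in the two-to-one case. To satisfy the separation hypothesis on the $v_i$ and to pin down the joint eigenbasis inside each two-dimensional preimage, I would supplement the $U_j$ with further auxiliary commuting unitaries, also simulable from $f$, chosen so that the joint eigenstates within each preimage eigenspace are the balanced superpositions $\tfrac{1}{\sqrt{2}}(\ket{x} \pm \ket{x'})$ with eigenvalue vectors that are pairwise $\eps$-separated.

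Second, I would turn any algorithm for Problem~\ref{attackProblem} on this instance into a distinguisher for $f$. Given a state close to $\basis{i}\basis{i}\basis{i}$, measure each of the three copies in the computational basis to obtain three independent samples from the support $f^{-1}(y)$ of the chosen joint eigenstate. In the two-to-one case this support has two elements and, since $\basis{i}$ is a balanced superposition, the three samples hit both preimages with probability $3/4$, producing a collision in $f$; in the one-to-one case the support is a singleton and no collision can be returned. Hence a $T$-query Problem~\ref{attackProblem} solver yields an $O(T)$-query collision distinguisher, and Aaronson--Shi then forces $T = \Omega(N^{1/3})$.

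The principal obstacle is the design of the auxiliary degeneracy-breaking unitaries: we need to force the joint eigenbasis to consist of the symmetric and antisymmetric superpositions $\tfrac{1}{\sqrt{2}}(\ket{x} \pm \ket{x'})$ using only $O(1)$ quantum queries to $f$, without implicitly implementing the collision-partner map $x \mapsto x'$ (which is precisely the information we are trying to extract). One plausible route is to let the auxiliary unitary act on an extended register that coherently pairs $x$ with its partner via an uncomputed lookup of $f$ and a fixed permutation of image labels; verifying that this construction remains honestly black-box in $f$ is the delicate part. A secondary, more routine, subtlety is that the adversary's output is only fidelity-close to $\basis{i}\basis{i}\basis{i}$, so a gentle-measurement argument is needed to guarantee that the collision survives the measurement step with constant probability. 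These two points are what turn the above outline into a complete proof.
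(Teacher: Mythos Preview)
Your approach differs fundamentally from the paper's, which proves the bound directly via the polynomial method (defining a degree-$2d$ polynomial $q(1/M)$ in a ``bucket-count'' parameter $M$, showing $q(1/M)=O(M/N)$ for small $M$ using a triorthogonal-uniqueness claim, and then interpolating). A clean reduction to Aaronson--Shi would indeed be attractive, but the proposal has a genuine gap at exactly the point you flag.

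The degeneracy-breaking auxiliary unitary you need cannot be implemented with $o(N^{1/3})$ queries to $f$, so the reduction is circular. Concretely: any unitary $W$ on $\C^N$ that commutes with your diagonal $U_j$'s must preserve each two-dimensional preimage space $\mathrm{span}\{\ket{x},\ket{x'}\}$, and if its eigenvectors there are $\tfrac{1}{\sqrt2}(\ket{x}\pm\ket{x'})$ with distinct eigenvalues $\lambda_\pm$, then on that block $W=\tfrac{\lambda_++\lambda_-}{2}I+\tfrac{\lambda_+-\lambda_-}{2}S$, where $S\ket{x}=\ket{x'}$ is the collision-partner swap. Thus a single application of $W$ to a computational basis state $\ket{x}$, followed by a computational-basis measurement, returns the partner $x'$ with probability $|\lambda_+-\lambda_-|^2/4$. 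If that quantity is bounded away from zero on a constant fraction of blocks, one call to $W$ solves collision; if it is $o(1)$ everywhere, $W$ fails to separate the eigenvalue vectors by the required $\eps$ and the instance no longer satisfies the hypotheses of the theorem. Either way, simulating controlled-$W$ with $O(1)$ queries to $f$ would contradict the very lower bound you are trying to invoke. Your suggested ``uncomputed lookup of $f$'' cannot help: the standard oracle gives $\ket{x}\ket{f(x)}$ but never $\ket{x'}$, and any black-box map $\ket{x}\mapsto\alpha\ket{x}+\beta\ket{x'}$ with $\beta\neq 0$ is already collision-solving.

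In short, the hard instances you want to build are only well-posed (non-degenerate, $\eps$-separated) once you have \emph{already} encoded the collision structure into the oracles, so no query-efficient embedding exists. The paper sidesteps this entirely by working with Haar-random eigenbases and degenerating them via a random hash $h:[N]\to[M]$, then applying the polynomial method directly to the success probability as a function of $1/M$ --- no reduction to a prior lower bound is used.
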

\begin{proof}
Our basic strategy is as follows. We first note that if the eigenspaces of the $U_i$ were actually degenerate, then Problem \ref{attackProblem} would actually be impossible to solve by a strengthening of triorthogonal uniqueness. In particular, we show
\begin{claim}\label{TriorthogonalClaim}
Let $W$ be a complex vector space. For any $\ket{\phi}\in W\otimes W \otimes W$, we have
$$
\E_{\basis{i}\textrm{ orthonormal basis of }W}\left[\sum_{i} |\cobasis{i}\cobasis{i}\cobasis{i}\ket{\phi} |^2 \right] \leq \frac{3}{\dim(W)}.
$$
\end{claim}
This says that no vector is close to being a diagonal 3-tensor in a random basis. In particular, if the eigenspaces of the $U_i$ are degenerate, our algorithm will not be able to determine which basis of the eigenspace in question is the one given by the $\basis{i}$, and thus will not be able to produce a state with large component in any $\basis{i}\basis{i}\basis{i}$ direction. From there we make use of the polynomial method to show that any black box algorithm of small size cannot distinguish between the degenerate and non-degenerate cases.

We begin with a proof of Claim \ref{TriorthogonalClaim}.
\begin{proof}
Let $n=\dim(W)$. It suffices to show that
$$
\E_{|\basis{}|_2=1}\left[ |\cobasis{}\cobasis{}\cobasis{}\ket{\phi} |^2 \right] \leq \frac{3}{n^2}.
$$
We rewrite $\basis{}$ as $\frac{1}{\sqrt{n}}\sum_{i=1}^n x_i\basis{i}$ where $\basis{i}$ is a random orthonormal basis for $W$ and $x_i$ are i.i.d. $\pm 1$ random variables. We claim that even after fixing the $\basis{i}$, the expectation over $x_i$ is at most $3/n^2$. In particular let $\ket{\phi} = \sum_{1\leq i, j, k \leq n} a_{ijk} \basis{i}\basis{j}\basis{k}$ where $\sum_{1\leq i, j, k \leq n} |a_{ijk}|^2=1$. Then the expectation over $x_i$ is
$$
\E_{x_i}\left[ \left| \sum_{1\leq i \leq j \leq k \leq n} a_{ijk}x_ix_jx_k \right|^2 \right]/n^3.
$$
Collecting like terms this is
\begin{align*}
\frac{1}{n^3}\E_{x_i}\Bigg[ \Bigg| \sum_{i,j,k=1}^n (a_{ijk}+a_{ikj}+a_{jik} & +a_{jki}+a_{kij}+a_{kji})x_ix_jx_k \\ & + \sum_{i=1}^n x_i \left(a_{iii} + \sum_{j=1,j\neq i}^n a_{ijj}+a_{jij}+a_{jji}  \right) \Bigg|^2 \Bigg].
\end{align*}
By orthogonality of the variables $x_ix_jx_k$ and $x_i$, this is
\begin{align*}
\frac{1}{n^3}\Bigg(\sum_{i,j,k=1}^n |a_{ijk}+a_{ikj}+a_{jik} & +a_{jki}+a_{kij}+a_{kji}|^2 \\ & + \sum_{i=1}^n \left|a_{iii} + \sum_{j=1,j\neq i}^n a_{ijj}+a_{jij}+a_{jji}  \right|^2 \Bigg).
\end{align*}
Noting that the later square terms have at most $3n-2$ terms each in them, by Cauchy-Schwartz this is at most
$$
\frac{1}{n^3}\left(\sum_{\substack{1\leq i, j, k \leq n \\ i\neq j \neq k}} 6|a_{ijk}|^2 + (3n-2)\sum_{i=1}^n \left(|a_{iii}|^2 + \sum_{j=1,j\neq i}^n |a_{ijj}|^2+|a_{jij}|^2+|a_{jji}|^2  \right) \right).
$$
Collecting terms, this is at most
$$
\sum_{i,j,k=1}^n (3n-2)|a_{ijk}|^2 /n^3 \leq 3/n^2.
$$
This completes the proof.
\end{proof}

Our basic approach will be by the polynomial method. Let $C$ be any circuit consisting of standard gates and at most $d$ controlled $U_i$ gates. We show that under the correct distributions over $U_i$, any circuit with $d$ too small will be unable to distinguish the cases where the eigenspaces of $U_i$ are degenerate (where the problem is impossible by Claim \ref{TriorthogonalClaim}), and those where it is not.

Let $\mathcal{D}$ be any probability distribution over $(S^1)^m$. We define a probability distribution over $(U_1,\ldots,U_m)$ by letting $\basis{i}$ be a random orthonormal basis of $V$ (under the Haar measure), and letting the $v_i$ be i.i.d. samples from $\mathcal{D}$. Note that if $\mathcal{D}$ is for example the uniform distribution over $(S^1)^m$ and $m \gg \log(N)$, then the separation requirement for the $v_i$ holds with high probability. We note that these choices uniquely determine the $U_i$. We note that the probability of success of our circuit is
$$
\E_{\basis{i},v_i}\left[\sum_i |\cobasis{i}\cobasis{i}\cobasis{i}C(U_i)\ket{0}|^2\right].
$$
We note that this is of the form
$$
\E_{v_i}[p(z_{ij},\bar{z_{ij}})]
$$
for some polynomial $p$ of degree at most $2d$.

For integers $M$, we define a slightly different probability distribution over the $v_i$. We let $h:[N]\rightarrow [M]$ be a uniform random hash function and let $v_i = u_{h(i)}$ where the $u_j$ are i.i.d. elements of $\mathcal{D}$. We let $A_M$ be
$$
\E_{v_i}[p(z_{ij},\bar{z_{ij}})]
$$
where the $v_i$ are distributed according to this distribution.

There are several things worth noting about this distribution.

Firstly, it is easy to see that our original probability of success is $\lim_{M\rightarrow \infty}A_M$. This is because for large $M$, with high probability $h$ has no collisions and therefore the distribution over the $v_i$ is arbitrarily close in total variational distance to i.i.d. copies of $\mathcal{D}$.

Secondly, we note that $A_M=q(1/M)$ for some degree at most $2d$ polynomial $q$. This is because for any fixed monomial $m(z_{ij},\bar{z_{ij}})$ of degree at most $2d$ we have that
$$
\E_{u_i}[m(z_{ij},\bar{z_{ij}})]
$$
only depends on the pattern of collisions that $h$ induces on the $i$'s appearing in $z_{ij}$ that show up in $m$, and the probability of any such collision pattern is a degree at most $2d$ polynomial in $1/M$.

Finally, we will need the following Lemma
\begin{lemma}
If $M$ is less than a sufficiently small multiple of $N/\log(N)$, then $q(1/M) = O(M/N)$.
\end{lemma}
\begin{proof}
We note that for such values of $M$ that with high probability that for every $j\in [M]$ that $|h^{-1}(j)| = \Omega(N/M)$. We claim that for any such fixed $h$, the probability of success of our circuit is $O(N/M)$ regardless of its size. In fact, this is true even after fixing the values of $u_j$, and the spaces $V_j = \mathrm{span}\{\basis{i}:h(i)=j\}$.

We note that the $V_j$ are eigenspaces for $U_i$ with eigenvalues $z_{ij}$. We note that the output of $C$ depends only on the $V_j$ and the $u_j$, but not on \emph{which} basis of $V_j$ is given by the $\{\basis{i}:h(i)=j\}$. Therefore, the output is some $\sum_j a_j \ket{\phi_j}$ for some $\ket{\phi_j}\in V_j$ and $\sum_j |a_j|^2=1$. The probability of success is then
$$
\sum_j |a_j|^2 \E_{\basis{i}}\left[\sum_{i:h(i)=j} |\cobasis{i}\cobasis{i}\cobasis{i}\ket{\phi_j} |^2 \right].
$$
By Claim \ref{TriorthogonalClaim}, this is at most
$$
\sum_j |a_j|^2 O(1/\dim(V_j)) = O(M/N).
$$
This completes the proof.
\end{proof}

\noindent So to summarize, if a circuit exists with only $d$ controlled $U_i$ gates, there exists a degree at most $2d$ polynomial $q$ so that
\begin{enumerate}
\item $q(1/M) = O(M/N)$ for integers $M \ll N/\log(N)$.
\item $q(0) = \Omega(1)$.
\end{enumerate}
We claim that this implies $d =\Omega(N^{1/3}).$

For this, we proceed by contradiction. Let $C$ be a sufficiently large constant, and assume that $N$ is larger than a sufficiently large multiple of $Cd^3$. For integers $i$ from $1$ to $2d+1$, let $m_i = \frac{Cd^3}{(2i-1)^2}$ and $M_i = \left\lfloor m_i \right\rfloor.$ Then $M_i$ is an integer with
$$
\frac{1}{M_i} = \frac{(2i-1)^2}{Cd^3} + O(1/m_i^2).
$$
We note that $q(1/M_i) = O(M_i/N)$ for each $i$, and using polynomial interpolation we will attempt to prove that this implies that $q(0)$ is also small. Using standard polynomial interpolation, we have that
$$
q(0) = \sum_{i=1}^{2d+1} q(1/M_i) \prod_{j\neq i} \frac{1/M_j}{1/M_j-1/M_i}.
$$
We begin by bounding these expressions if the $M_j$ were replaced by $m_j$.
\begin{align*}
\left|\prod_{j\neq i,j\leq 2d+1} \frac{1/m_j}{1/m_j-1/m_i}\right| & = \left|\prod_{j\neq i,j\leq 2d+1} \frac{(2j-1)^2/(Cd^3)}{(2j-1)^2/(Cd^3)-(2i-1)^2/(Cd^3)}\right|\\
& = \left|\prod_{j\neq i,j\leq 2d+1} \frac{(2j-1)^2}{(2j-1)^2-(2i-1)^2}\right|\\
& \leq \left|\prod_{j\neq i} \frac{(2j-1)^2}{(2j-1)^2-(2i-1)^2}\right|,
\end{align*}
where the final product is over all positive integers. This last product is proportional to $(2i-1)^{-2}/f'((2i-1)^2)$ where $f(z)=\prod_{j=1}^\infty (1-z/(2j-1)^2)$. However, $f(z)$ has exactly the same roots as $\cos(\pi \sqrt{z}/2)$. Since both are order less than $1$ holomorphic functions that agree at $0$, we have that they must be equal. Therefore, we have that
$$
\left|\prod_{j\neq i,j\leq 2d+1} \frac{1/m_j}{1/m_j-1/m_i}\right| = O(|\pi\sin(\pi (2i-1)/2)/(4(2i-1)^3)|) = O(1/i^3).
$$
Therefore, we have that
\begin{align*}
\left|  \prod_{j\neq i} \frac{1/M_j}{1/M_j-1/M_i} \right| & = \left|  \prod_{j\neq i} \frac{1/m_j+O(1/m_j^2)}{1/m_j-1/m_i + O(1/m_i^2+1/m_j^2)} \right|\\
& \leq \left|  \prod_{j\neq i} \frac{1/m_j}{1/m_j-1/m_i} \right|\prod_{j\neq i}\left(1+ \frac{O(1/m_i^2+1/m_j^2)}{|1/m_i-1/m_j|}\right)\\
& = O(1/i^3)\exp\left(\sum_{j\neq i} O\left(\frac{i^4+j^4}{(i^2-j^2)Cd^3} \right) \right)\\
& \leq O(1/i^3)\exp\left(\sum_{j\neq i} O\left(\frac{\max(i,j)^4}{(\max(i,j)|i-j|Cd^3} \right) \right)\\
& \leq O(1/i^3)\exp\left(\sum_{j\neq i} O\left(\frac{\max(i,j)^3}{|i-j|Cd^3} \right) \right).
\end{align*}
Now if $i\leq \sqrt{d}$, the terms with $j\leq 2i$ sum to at most $O(1/d)$, and the larger terms in the sum are $O(j^2/Cd^3)$, and therefore sum to $O(1)$. If $i\geq \sqrt{d}$, then the terms are $O(1/C|i-j|)$, and thus sum to $O(\log(d)/C)$. This implies that
\begin{align*}
q(0) & = \sum_{i=1}^{2d+1} q(1/M_i) \prod_{j\neq i} \frac{1/M_j}{1/M_j-1/M_i}\\
& = \sum_{i=1}^{\sqrt{d}} O(M_i/N i^3) + \sum_{i=\sqrt{d}}^{2d+1} O(\log(d)M_i/ni^3) \\
& = \sum_{i=1}^{\sqrt{d}} O(Cd^3/N i^5) + \sum_{i=\sqrt{d}}^{2d+1} O(Cd^3\log(d)/ni^5) \\
& = O(Cd^3/N),
\end{align*}
which is $o(1)$ if $N$ is a sufficiently large multiple of $Cd^3$.
\end{proof}

We note that this bound is nearly tight. In particular, if we assume separation of the $v_i$, there is actually an algorithm for solving Problem \ref{attackProblem} with constant probability in $O(N^{1/3}m/\epsilon)$ queries. The algorithm involves computing $N^{1/3}$ pairs $\basis{i}\basis{i}$, then preparing $N^{2/3}$ many other Bell states. These Bell states can be thought of as being in a superposition of all combinations of $N^{2/3}$ pairs tensored together. There is a reasonably probability that one of these $N^{2/3}$ pairs agrees with one of our $N^{1/3}$ pairs, and we can find the index of such a pair using Grover's algorithm by measuring the eigenvalues of only $O(N^{1/3})$ of our Bell pairs. In order to compute the eigenvalues to sufficient accuracy takes only $O(m/\epsilon)$ queries each. Thus, this algorithm has query complexity $O(N^{1/3})$, although the full complexity is $N^{2/3}$.

\section{Implementation Using Modular Forms}

In the last section, we discussed a quantum money protocol that depending on having access to a number of black box, commuting operators. However, for our protocol to be cryptographically secure, we will need to implement it using operators that are cryptographically difficult to work with. This is a bit of an issue as most easily computable sets of commuting operators will not be secure in this way. For example, taking $U_i=Z_i$ gives an easy set of commuting operators, but ones for which it is easy to manufacture eigenstates (even ones with specified eigenvalues). We come up with a hopefully secure set of commuting operators using the theory of modular forms, presented using computations involving class groups of quaternion algebras. In the next two subsections, we review the existing theory on these topics.

\subsection{Modular Forms}\label{ModularFormsSection}

Here we provide a very brief overview of the theory of modular forms that will be relevant to our discussion. For a more detailed explanation, see \cite{ModularForms}.

Let $\h =\{z\in \C: \Im(z)>0\}$ denote the upper half of the complex plane. For positive integers $N$, let $\Gamma_0(N)$ denote the group of two by two matrices
$$
\Gamma_0(N) := \left\{\left(\begin{matrix} a & b \\ c & d \end{matrix} \right): a,b,c,d \in \Z, ad-bc=1, c\equiv 0 \pmod{N} \right\}.
$$
This is a slight abuse of notation as this $N$ will not quite agree with the $N$ used in the previous section, though should be consistent up to constant multiples. A \emph{weakly modular function} of weight $2$ and level $N$ is an analytic function $f$ on $\h$ so that for every $\left(\begin{matrix} a & b \\ c & d \end{matrix} \right) \in \Gamma_0(N)$ we have that
$$
f(z) = (cz+d)^{-2}f\left(\frac{az+b}{cz+d} \right).
$$
Such a function is called a \emph{cusp form} if for $q$ any rational number or $i\infty$, $\lim_{z\rightarrow q} f(z) = 0$. We let $S_2(\Gamma_0(N))$ denote the space of cusp forms of weight $2$ and level $N$. We note the well known fact that $\dim(S_2(\Gamma_0(N))) = \lfloor N/12\rfloor.$

Note that for such functions $f(z)=f(z+1)$. Therefore $f$ can be written as a power series in $q=e^{2\pi i z}$. Since $f$ vanishes at $i\infty$, we can write it as $f(q)=\sum_{n=1}^\infty a_n q^n$.

For primes $p$ relatively prime to $N$ we have the Hecke operator $T_p$ acting on $S_2(\Gamma_0(N))$ as $(T_pf)(z) = pf(pz)+\frac{1}{p}\sum_{a=0}^{p-1} f\left( \frac{z+a}{p}\right).$ Equivalently, if $f(q)=\sum a_nq^n$, $T_pf(q) = \sum pa_n q^{pn} + \sum a_{np}q^p$.

The $e^{iT_p}$ will form our (hopefully cryptographically complicated) set of commuting operators on a vector space. However, in order for this to be useful, we will need a way to compute with them. In the coming sections we discuss a method for computing (approximately) operators that are isomorphic to the $e^{iT_p}$. First we will need to introduce some facts about quaternion algebras.

\subsection{Quaternion Algebras}\label{QuaternionSection}

Before we discuss our implementation in detail, we will need to review some basic facts about orders in quaternion algebras for which \cite{quaternions} can be used as a reference. We will at times assume that $N\equiv 3\pmod{4}$, as this will allow us to make several parts of the computation more explicit, although this is not necessary in general.

Let $\HH := \{a+bi+cj+dk:a,b,c,d\in \R\}$ be the Hamilton quaternions, a non-commutative ring whose multiplication is given by the relations $i^2=j^2=k^2=ijk=-1$. Define the conjugate of an element by $\overline{(a+bi+cj+dk)} = a-bi-cj-dj$. Then for an element $z=a+bi+cj+dk$, we define $\tr(z)=a=(z+\bar{z})/2$ and $Nm(z)=a^2+b^2+c^2+d^2 = z\bar{z}$. Note that $Nm$ defines a positive definite quadratic form on $\HH$. For prime $N$ let $H_N:=\Q[i,\sqrt{N}j,\sqrt{N}k]$ be the unique quaternion algebra over $\Q$ ramified only at $N$ and infinity. Let $\co_N$ be a maximal order in $H_N$. In particular, for $N\equiv 3\pmod{4}$, we may take $\co_N=\Z[i,(1+\sqrt{N}j)/2,(i+\sqrt{N}k)/2]$.

A (left) fractional ideal of $\co_N$ is a full-rank lattice in $H_N$ that is closed under left multiplication by elements of $\co_N$. We define the set of ideal classes $\cl(\co_N)$ to be the set of fractional ideals of $\co_N$ modulo right multiplication by elements of $H_N$ (i.e. we define $I\sim J$ if there's some $z\in H_N$ so that $I=Jz$).

Let $p$ be a prime not equal to $N$. Let $[I],[J]\in \cl(\co_N)$ be ideal classes. Define $a_p([I],[J])$ to be the number of $J'\sim J$ so that $J'\subset I$ and so that $I/J \cong \Z/p \times \Z/p$. We note that in such a case, $pI\subset J$ with $J/pI$ also isomorphic to $\Z/p\times \Z/p$, and thus $a_p([I],[J])=a_p([J],[I]).$

Finally, let $V_N$ be the subset of $\C^{\cl(\co_N)}$ where the sum of the coefficients is equal to $0$. Let $M_p$ be the matrix acting on $\C^{\cl(\co_N)}$ with $[I],[J]$-entry $a_p([J],[I])$.
\begin{fact}
The $M_p$ preserve $V_N$ and act as self-adjoint operators on it. Furthermore, the system of operators $M_p$ acting on $V_N$ is isomorphic to the system of operators $T_p$ acting on $S_2(\Gamma_0(N))$.
\end{fact}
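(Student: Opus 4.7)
My plan is to split the fact into two assertions and handle them separately: the combinatorial claim that $M_p$ preserves $V_N$ and is self-adjoint, and then the substantive Hecke-module isomorphism with $S_2(\Gamma_0(N))$, which is the classical Eichler--Brandt correspondence.

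For the first assertion I would compute the row sums of $M_p$. Fix $[I]\in\cl(\co_N)$ with representative $I$; the quantity $\sum_{[J]} a_p([I],[J])$ counts all left $\co_N$-submodules $J'\subset I$ with $I/J'\cong \Z/p\times\Z/p$. Since $p\neq N$, $\co_N\otimes\Z_p \cong M_2(\Z_p)$ and $I\otimes\Z_p$ is a free rank-one left $M_2(\Z_p)$-module. The relevant sublattices are determined by their completion at $p$, and $M_2(\Z_p)$-submodules of index $p^2$ with quotient $(\Z/p)^2$ correspond bijectively to lines in $\mathbb{F}_p^2$, of which there are $p+1$. Hence every row of $M_p$ sums to $p+1$, so the all-ones vector is an eigenvector and its orthogonal complement $V_N$ is preserved. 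The symmetry $a_p([I],[J])=a_p([J],[I])$ already noted in the excerpt — arising from the duality $J'\subset I$ versus $pI\subset J'$ — then shows $M_p$ is symmetric, hence self-adjoint with respect to the standard Hermitian inner product on $V_N$.

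For the isomorphism with $T_p$ on $S_2(\Gamma_0(N))$ I would invoke the Eichler--Brandt correspondence, following \cite{quaternions}. One constructs a map $\Theta:\C^{\cl(\co_N)}\to M_2(\Gamma_0(N))$ from ideal classes to modular forms by attaching to each class the weighted theta series of the associated quadratic form $Nm$, scaled by the inverse of the unit group. A direct computation shows $\Theta([I])$ is a modular form of weight $2$ and level $N$, and that the image of $V_N$ (the sum-zero subspace) lies in the cuspidal subspace $S_2(\Gamma_0(N))$. Hecke equivariance $\Theta\circ M_p = T_p\circ \Theta$ is verified at $q$-expansions: both sides count the same sublattice/isogeny data, using the moduli-theoretic interpretation of $\cl(\co_N)$ as isomorphism classes of supersingular elliptic curves over $\overline{\mathbb{F}}_N$ via the Deuring correspondence, under which $a_p([I],[J])$ literally counts $p$-isogenies between the associated supersingular curves, matching the definition of $T_p$. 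Finally, injectivity of $\Theta|_{V_N}$ together with a dimension count — $\dim V_N = |\cl(\co_N)|-1$, the Eichler mass formula, and the classical $\dim S_2(\Gamma_0(N)) = \lfloor N/12\rfloor$ already mentioned — forces $\Theta|_{V_N}$ to be a Hecke-equivariant isomorphism.

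The main obstacle is the Hecke equivariance and surjectivity, which together comprise the definite case of the Jacquet--Langlands correspondence. I would not reprove this but cite it from \cite{quaternions}, contenting myself with sketching the supersingular-isogeny dictionary that identifies $a_p([I],[J])$ with the matrix coefficients of $T_p$ on the basis of supersingular curves. The row-sum and symmetry arguments above are self-contained and quick, so the bulk of the work is genuinely the appeal to the classical correspondence.
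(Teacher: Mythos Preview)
The paper does not prove this statement at all: it is recorded as a \texttt{Fact} and simply attributed to the reference \cite{quaternions}. Your proposal therefore already does strictly more than the paper, and in the expected way---the combinatorial part (constant row/column sums equal to $p+1$, symmetry from the $J'\subset I \leftrightarrow pI\subset J'$ duality) is correct and self-contained, and the deeper isomorphism is correctly identified as the Eichler--Brandt/Jacquet--Langlands correspondence, which you rightly cite rather than reprove. One small caution on the dimension count: the exact equality $|\cl(\co_N)|-1=\dim S_2(\Gamma_0(N))$ requires the refined genus/class-number formulae (with the usual $N\bmod 12$ corrections), not just the coarse $\lfloor N/12\rfloor$ quoted in the paper, so if you spell that step out be sure to use the precise versions.
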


\subsection{Computation of $M_p$}\label{computationSection}

In order to use these operators in our quantum money scheme, we will need to find a way to make these operators computationally tractable. Firstly, we will need to find a better way of representing our ideal classes. While it is easy to give a single fractional ideal in the class, it is important for us that we find a canonical representation. To do this we first note that given a class $I$ we can represent it's class by the fractional ideal $Iz^{-1}$ for $z$ a non-zero element of minimal norm in $I$. We note that this representation does not depend on our original ideal $I$ in the class, though it may depend on a choice of one of finitely many elements of minimal norm in $I$. The fractional ideal $Iz^{-1}$ can then be represented by providing a reduced basis for the corresponding lattice in $H_N$. We note that this provides a canonical representation of an element of $\cl(\co_N)$ up to the choice of finitely many possible choices of minimal element $z$ and finitely many possible reduced bases of $Iz^{-1}$. Of these possibilities, we take the lexicographically first representation to present the element $[I]$.

Next, for an ideal class $[I]$ given in this format, we will need to find the multiset of ideal classes $[J]$ with non-zero $a_p([I],[J])$-entries. This is relatively straightforward as we need to find $I\supset J \supset pI$ that are invariant under left multiplication of $\co_N$, or equivalently we need to find $J/pI \subset I/pI$ that are invariant under $\co_N/p$. However, it is a standard fact that the action of $\co_N/p$ on $I/pI$ will always be isomorphic to the action of $M_2(\Z/p)$ on itself. Once these isomorphisms are computed, the invariant elements of $I/pI$ will correspond to $\{A\in M_2(\Z/p):Av=0\}$ for $v$ some non-zero element in $(\Z/p)^2$. Since these sets will be invariant under scaling of $v$, there will be exactly $p+1$ such $J$s that should be computable in a straightforward manner. Furthermore, since we are given a reduced basis of $I$ and since $J$ is a small index sublattice of $I$, it will be relatively simple to compute a reduce basis for $J$ and thus, the appropriate canonical representation for $[J]$.

This allows us to compute the non-zero entries of a row of $M_p$ (which is manifestly self-adjoint). Thus, using standard Hamiltonian simulation algorithms, it is straightforward to approximate the action of $e^{iM_p}$ on $V_N$.

\subsection{Producing Bell States}\label{bellSection}

There is one additional difficulty in implementing our scheme in this context. It is that there is no obvious way to produce a Bell state for $V_N$. In this section, we provide an efficient algorithm for doing this.

In order to produce this representation, we first note that it suffices to produce a state which is a uniform superposition of the representatives for the elements of $\cl(\co_N)$. In order to do this, we begin by providing a different representation of such elements.

In particular, we associated before a class $[I]$ to a fractional ideal $Iz^{-1}$ where $z$ was an element of minimal norm in $I$. Let $m$ be the minimal positive integer so that $mIz^{-1} \subset \co_N$. We could as easily represent $[I]$ by the fractional ideal $mIz^{-1}$ (we note that it is easy to compute this map in either direction). Next, we note that $m\co_N \subset mIz^{-1} \subset \co_N$.

In order to describe $mIz^{-1}$ we consider $mIz^{-1}/m\co_N \subset \co_N/m$, which is invariant under left multiplication by $\co_N/m$. We note again that $\co_N/m$ is isomorphic to $M_2(\Z/m)$ (as it must be the case that $m$ and $N$ are relatively prime). Picking such an isomorphism, $mIz^{-1}$ corresponds to an ideal in $M_2(\Z/m)$. It is easy to see that $mIz^{-1}$ does not contain $m'\co_N$ for $m'>1$ nor is it contained in $m'\co_N$ for $0<m'<m$. This implies that $mIz^{-1}$ must correspond to some ideal of the form $\{A:Av=0\}$ where $v=(a,b)\in (\Z/m)^2$ with $\gcd(a,b,m)=1$. In other words, $mIz^{-1}$ is specified by a positive integer $m$ and an element $v\in \PP^1(\Z/m)$. More particularly, this corresponds to a triple of positive integers $m,d,a$ (where the element of $\PP^1$ is $[d:a]$) satisfying $d|m$, $a\leq m/d$ and $\gcd(a,d)=1$. Equivalently, writing $b=m/d$, this corresponds to a triple $d,a,b$ with $\gcd(a,d)=1$ and $a\leq b$.

Unfortunately, this representation is not $1-1$. In particular, some triples $a,b,d$ will correspond to a fractional ideal $m\co_N\subset I \subset \co_N$ so that $I/m$ does not give the canonical representative of its ideal class (this would happen for example if $m$ was not the smallest norm element of $I$). Fortunately, we can detect when this is the case.

In order to proceed, we will first need to bound the size of $m$ that we may need to deal with. Assuming that our ideal class contains an ideal $I$ with discriminant $D$, we note that the minimal norm element, $z$, will have norm $O(D^{1/4})$. Therefore, $Iz^{-1}$ will have discriminant $\Omega(1)$. Since the sublattice $\co_N$ has discriminant $N$, this means that $[\co_N:Iz^{-1}] = O(N)$. However, $\co_N/Iz^{-1}\cong (\Z/m)^2$, and this implies that $m=O(\sqrt{N})$.

We now proceed in the following stages:
\begin{enumerate}
\item Let $C$ be a sufficiently large constant.
\item Produce a state proportional to $\sum_{d=1}^\infty \frac{1}{d}\ket{d}.$
\item Use this to manufacture the uniform distribution over $\ket{d,a,b}$ over triples of positive integers $d,a,b$ with $da,db \leq C\sqrt{N}$. This can be approximated by sending $\ket{d}$ to
$$
\ket{d}\otimes \left(\frac{1}{\sqrt{M_d}}\sum_{i=1}^{M_d}\ket{i} \right)^{\otimes 2}
$$
where $M_d = \lfloor C\sqrt{N}/d \rfloor$.
\item Reject if $a>b$ or $\gcd(a,d)>1$.
\item Reject if the ideal defined by the triple $(a,b,d)$ is not in canonical form.
\end{enumerate}
We claim that this procedure has a constant probability of passing the rejection sampling. In particular, we note that in step 3, we have a uniform mixture over $\sum_{d=1}^\infty O(N/d^2) = O(N)$ states. The number of these states that survive the rejection sampling is the number of distinct ideal classes, which is approximately $N/12$. Therefore, the probability of survive the rejection sampling is $(N/12)/O(N) = \Omega(1)$.

This allows us to produce the uniform distribution over representatives of our ideal class group. By adding coordinatewise to a zero ancilla, we produce a Bell state for $\C^{\cl(\co_N)}$. By throwing away copies of the uniform distribution over classes, we produce a Bell state for $V_N$. Using this, we can implement the quantum money scheme described in Section \ref{BlackBoxSection}.

\section{Attacks}\label{attacksSection}

As the operators $U_i$ are no longer black box, one must worry now about additional attacks against our system. We note here some of the most  obvious and reasons why they might not be expected to work.

\subsection{Use of Other $U_p$}
An attacker will have access not just to the $U_i$ used by the algorithm but also to $e^{iT_p}$ for any (reasonably sized) prime $p$. However, our black box lower bounds should apply even if $m$ is very large (or even infinite), and so this will not invalidate our lower bounds.

\subsection{Other Powers of $e^{i T_p}$}
An attacker will be able to apply fractional powers of the $U_i$, however, it is not difficult to see that our black box lower bounds can be generalized to this case.

\subsection{Sparse Logarithms}
The $\log(U_i)$ used in our protocol are sparse operators. One might potentially take advantage of this. A potential worry is that one might use an HHL-like algorithm to find eigenvalues (one cannot use HHL directly as the matrix used would not be invertible). However, HHL only accesses the operator in question via Hamiltonian simulation, and thus would also be covered by our black box lower bounds. It is not clear how else an attack might make use of this.

\subsection{Quantum State Restoration}

A technique in \cite{staterestoration} was developed to break a number of quantum money schemes that look superficially like ours. These schemes use eigenstates of some operator $H$ where the state itself has some clean (but secret) product representation. They show in \cite{staterestoration} that if one is given a state $\ket{\psi}=\ket{\psi_A}\otimes \ket{\psi_B}\in V_A\otimes V_B$ and can compute a measurement of whether we are in state $\ket{\psi}$, we can produce a duplicate of the state $\ket{\psi_B}$ in time $\poly(\dim(V_B))$. If the supposedly secure state is a tensor product of many small pieces, this can be used to recover the individual pieces one at a time.

However, we have no reason to believe that the eigenstates involved in our algorithm can be decomposed as such tensor products, so this class of attacks seems unlikely to work. In fact, it is unclear if there is even any natural way to write $V_N$ as a tensor product.

\subsection{Direct Manufacture of Eigenstates}

Finally, one might attempt to reconstruct a basis state $\basis{i}$ from the eigenvalues $v_i$. This seems to correspond roughly to the classical problem of given approximations to the low degree coefficients of an eigenform in $S_2(\Gamma_0(N))$, to approximate the coefficients of its representation in $V_N$. The correspondence between these two representations is known to send the ideal class $[I]$ to the function $\sum_{z\in I,z\neq 0} q^{[I:\co_N z]}.$ So the low degree coefficients of the form, might tell useful information about the coefficients of some of the ideal classes with relatively small elements, but it seems like it would be hard to get anything approaching the full representation is less than $\poly(N)$ time.

\section{Conclusion}

We have presented what seems like it should be a fairly efficient quantum money protocol. As far as we can tell, there are no subexponential attacks on this protocol, and so it should be possible implement securely with only a few hundred qbits.

\section{Acknowledgements} I would like to thank Scott Aaronson for his help with the presentation of this paper. The author is supported by NSF Award CCF-1553288 (CAREER) and a Sloan Research Fellowship.

\end{document}